\documentclass[10pt,a4paper,reqno]{amsart}
\usepackage{amsmath}
\usepackage{amsfonts}
\usepackage{amssymb}
\usepackage{graphicx}

\usepackage{anysize}
\marginsize{3cm}{3cm}{3cm}{3cm}

\newtheorem{theorem}{Theorem}[section]

\begin{document}
\setcounter{page}{1}
\bigskip
\title[\centerline{The contraction of the representations of the group $SO(4,1)$}
\hspace{0.5cm}] {The contraction of the representations of the group $SO(4,1)$ and cosmological interpretation}

\author[\hspace{0.7cm}\centerline{The contraction of the representations of the group $SO(4,1)$}]{B.A. RAJABOV$^1$ }

\thanks {\noindent $^1$ N.Tusi Shamakhi Astrophysics Observatory, National Academy of Sciences of Azerbaijan;\\
\indent \,\,\, e-mail: balaali.rajabov@mail.ru.}

\bigskip
\begin{abstract}
	We consider the operation of contraction of unitary irreducible representations of the de Sitter group $ SO(4,1) $. It is shown that a direct sum of unitary irreducible representations of the Poincar\'{e} group with different signs of the rest mass is obtained as a result of contraction. The results obtained are used to interpret the phenomena of "dark matter" and "dark energy" in terms of the elementary quantum systems of de Sitter's world.
	
\bigskip
\noindent Keywords: de Sitter world, $SO(4,1)$ group, Wigner-In\"{o}n\"{u} limit, "dark matter", "dark energy", unitary irreducible representations, contraction, elementary systems.

\bigskip
\noindent AMS Subject Classification: 20-20C, 83-83C, 83-83F, 81-81G
\end{abstract}

\maketitle
\bigskip

	\section{Introduction}
	In this article, we will consider the application of the operation of contraction for representations of the group of movements of the de Sitter world. The operation of contraction was introduced in the paper of Wigner-In\"{o}n\"{u} for the Lie algebra of the relativistic Poincar\'{e} group, and the speed of light was used as the contraction parameter, \cite{Wigner-Inonu-1953}. As a result, a transition to the non-relativistic Galilean group was obtained. Subsequently, this operation was generalized to all Lie algebras, \cite{Barut-Raczka}.
	
	The contraction of the representations of the de Sitter group $SO(4,1)$ from the points of Lie algebra was considered in \cite{Strom1961}-\cite{Strom1965}. In the article \cite{Mic-Nied}, the contraction operation is extended to representations of the continuous basic series of groups $ SO(p,1) $, and two types of contraction are considered: to representations of the non-homogeneous Lorentz group $ ISO(3,1) $ and to the representations of the group of Euclidean motions $M(p)$. Drechsler considered contraction in a fiber bundle with Cartan connection, in particular, with the de Sitter group bundle, \cite{Drechsler}. 
	
	In \cite{Rajab}, explicit expressions for the matrix elements of non-degenerate irreducible representations of the group $ M(4) $ are found using the operation of contraction of the matrix elements of unitary irreducible representations of the continuous fundamental series of $ SO(4,1) $ group.
	
	The monograph \cite{Gromov} is devoted to multidimensional generalizations of contraction for classical and quantum groups.
	
	The papers \cite{Nachtmann1967}-\cite{Borner-Durr} consider the quantum field theory in the de Sitter world. The problems of dark matter and dark energy are reviewed \cite{Ryabov}-\cite{DarkMatter-3}.
	
	The present paper is devoted to the contraction of unitary irreducible representations of the de Sitter group $SO(4,1)$. The scalar radius of curvature of space-time is chosen as the contraction parameter. Using the of Wigner-In\"{o}n\"{u} limit, a direct sum of unitary irreducible representations of the Poincar\'{e} group with different energy signs is obtained. Results obtained are used to interpret "dark matter" and "dark energy" using the elementary systems of de Sitter's world.
		
	The main results of this work were reported in the 3rd international Scientific Conference "Modern Problems of Astrophysics-III", September 25-27, 2017, Georgia.
	
	\section{Background ... Spin is a purely quantum phenomenon}	
	
	The Stern-Gerlach experiment proved the existence of an intrinsic orbital angular momentum and the magnetic moment of the electron, which assume discrete values (1922), \cite{Stern}.
	
	The works of Goudsmit and Uhlenbeck showed that an attempt to explain the spin in the representations of classical physics leads to insuperable difficulties (appear speeds greater than the speed of light, 1925), \cite{Goudsmit}-\cite{Thomas}.
	
	V. Pauli proposed a non-relativistic Schr\"{o}dinger equation for an electron with allowance for the spin variable (Pauli equation for a 2-component electron, 1927), \cite{Schiff}.
	
	P. Dirac developed the relativistic theory of the electron and introduced 4-component wave functions (Dirac equation, 1928), \cite{Bjorken}.
	
	The main conclusion was that the electron spin is a purely quantum phenomenon. The peak of these theories was the connection of spin with statistics and the prediction of the positron.
	
	\textit{Since then, in all reviews, monographs, and courses, this story is given  to explain the quantum nature of spin...}

	\section{Wigner's theory of elementary systems}
	
	A real understanding of the nature of the spin appeared after the classical work of E. Wigner on the representations of the inhomogeneous Lorentz group (that is, the Poincar\'{e} group, 1937), \cite{Wigner1937}. He established that projective representation of this group acts in the Hilbert space of states. Wigner introduced the concept of an elementary system and showed that such systems correspond to irreducible representations, that is, the space of these representations does not have invariant subspaces.
	
	Most importantly. Wigner proved that the rest mass and the spin of an elementary particle (!) are:
	\begin{itemize}
		\item  invariants that uniquely characterize irreducible representations of the Poincar\'{e} group;
		\item consequences of the space-time symmetries and are not consequences of the equations of dynamics.\footnote{Thus, the justification of the spin by the Pauli and Dirac equations is of historical significance. Rephrasing Oscar Wilde, we can say that if Wigner's work had appeared earlier, the history of quantum electrodynamics would have evolved differently.}
	\end{itemize}
	 
	Invariants of the inhomogeneous Poincar\'{e} group (Casimir operators) are constructed using translation generators $ P_{\mu} $ and 4-dimensional rotations $ M_{\mu\nu} $ as follows:
	\begin{equation}
	m^{2}=P^{\mu}P_{\mu};\qquad w^{2}=w^{\mu}w_{\mu}=m^{2}s(s+1);\qquad w_{\varrho}=\dfrac{1}{2}\epsilon _{\lambda\mu\nu\varrho}P^{\lambda}M^{\mu\nu},
	\end{equation}
	where
	
	$ m $ -- mass, $ s $ -- spin of the particle, and $ w_{\varrho} $ is the Pauli-Lubansky-Bargmann vector.
	
	\textbf{\textit{Important!}} In the case $ m^{2}\geq 0 $, there is a third invariant - the sign of energy:
	\[ \varepsilon = \dfrac{P_{0}}{|P_{0}|}=\pm 1 \]
	
	\textit{So, spin is a purely quantum phenomenon, but it is associated with groups of space-time movements ...}\footnote{In 1947, Wigner and Bargmann, on the basis of the theory of representations of the Poincar\'{e} group, classified the relativistic equations for spin particles, \cite{Bargmann}. That is, first the spin, and then the equations ...}
	
	\section{Einstein's equations and de Sitter's solutions}
	According to Einstein's general relativity, the metric properties of space-time are determined by the distribution and motion of matter, \cite{Einstein}:
	\begin{equation}
	R_{\mu\nu}-\dfrac{1}{2}g_{\mu\nu}R+\varLambda g_{\mu\nu}=-\dfrac{8\pi G}{c^{4}}T_{\mu\nu};\quad R=R_\mu^\mu
	\end{equation}
where

$ R_{\mu\nu} $ -- Ricci tensor, $ g_{\mu\nu} $ -- metric tensor, $ \varLambda $ -- cosmological constant, $ G $ -- the gravitational constant of Newton, $ T_{\mu\nu} $ -- energy-momentum tensor and $ \mu,\nu=0,1,2,3. $.

This equation can be rewritten in an equivalent form:
\begin{equation}
R_{\mu\nu}=\varLambda g_{\mu\nu}-\dfrac{8\pi G}{c^{4}}(T_{\mu\nu}-\dfrac{1}{2}g_{\mu\nu}T);\quad T= T_\mu^\mu
\end{equation}
It is clear from (2) that the $ \varLambda $-term, even in the absence of matter ($ T_{\mu\nu}=0 $), changes the space-time geometry and $ g_{kl}\Lambda $ is the energy-momentum tensor of the vacuum.\footnote{The values of $ G $ and $ \varLambda $ are constantly refined with the accumulation of observations:
	
	$ G=6,67545\frac{sm^{3}}{g \cdot sec^{2}} $,\quad (2013);\qquad $ \Lambda \sim 10^{-53}m^{-2},\quad (1998) $.}  

In the vacuum, the Einstein equations take the form:
\begin{equation}
R_{\mu\nu} = \varLambda g_{\mu\nu}.
\end{equation}
For $ \varLambda = 0 $, the solution of (3) is the Minkowski manifold with a group of Poincar\'{e} motions.

The histories of the cosmological constant are reviewed in \cite{Straumann}-\cite{One Hundred}.

In the general case, the solutions of the Einstein (1)-(2) equations do not have a group of motions. But in 1917 Willem de Sitter found two solutions of (3) for $ \Lambda\neq 0 $ with different global groups of movements, \cite{Weinberg}--\cite{Tolman}: 
\begin{equation}
ds^{2}=\dfrac{dr^{2}}{1-r^{2}/R^{2}}+r^{2}(d\vartheta^{2}+\sin^{2}\vartheta d\varphi^{2})-\left({1-\dfrac{{r^{2}}}{{R^{2}}}}\right)c^{2}dt^{2},\qquad if \ \Lambda > 0;
\end{equation} 
\begin{equation}
ds^{2}=\dfrac{dr^{2}}{1+r^{2}/R^{2}}+r^{2}(d\vartheta^{2}+\sin^{2}\vartheta d\varphi^{2})-\left({1+\frac{r^{2}}{R^{2}}}\right)c^{2}dt^{2},\qquad if \ \Lambda < 0.
\end{equation} 
Here the radius of curvature $ R $ and the cosmological constant $ \Lambda $ are related by the following formula:\footnote{It is easy to see that for $ \Lambda\rightarrow 0 $, i.e. $ R\rightarrow\infty $ both solutions (4)-(5) are transferred to the flat world of Minkowski.}
\[ \Lambda = \pm\dfrac{3}{R^{2}} \]

Using stereographic projections:
\begin{eqnarray*}
\xi_{1}=r\cos \varphi;\quad \xi_{2}=r\sin\vartheta\cos\varphi;\quad \xi_{3}=r\sin\vartheta\sin\varphi;\\\xi_{4}=R\sqrt{1-\dfrac{r^{2}}{R^{2}}}\cosh\left(\dfrac{ct}{R}\right);\quad \xi_{0}=R\sqrt{1-\dfrac{r^{2}}{R^{2}}}\sinh\left(\dfrac{ct}{R}\right)
\end{eqnarray*}
and 
\begin{eqnarray*}
	\eta_{1}=r\cos \varphi;\quad \eta_{2}=r\sin\vartheta\cos\varphi;\quad \eta_{3}=r\sin\vartheta\sin\varphi;\\\eta_{4}=R\sqrt{1+\dfrac{r^{2}}{R^{2}}}\cosh\left(\dfrac{ct}{R}\right);\quad \eta_{5}=R\sqrt{1+\dfrac{r^{2}}{R^{2}}}\sinh\left(\dfrac{ct}{R}\right)
\end{eqnarray*}
de Sitter solutions can be isometrically embedded as sub-manifolds in 5-dimensional pseudo-Euclidean spaces:
\begin{equation}
\xi_{0}^{2}-\xi_{1}^{2}-\xi_{2}^{2}-\xi_{3}^{2}-\xi_{4}^{2}=-R^{2}\label{eq:07}
\end{equation}
and
\begin{equation}
\eta_{1}^{2}+\eta_{2}^{2}+\eta_{3}^{2}-\eta_{4}^{2}-\eta_{5}^{2}=-R^{2},\label{eq:8}
\end{equation}
respectively.

These spaces have global symmetry groups $ SO(4,1) $ and $ SO(3,2) $ that leave the metrics (6)-(7) invariant. Groups $ SO(4,1) $ and $ SO(3,2) $ are called de Sitter groups. Spaces (4),(6) and (5),(7) are called de Sitter worlds of the 1st and 2nd kind or according to the modern terminology, de Sitter worlds $dS$ and anti--de Sitter $AdS.$

We restrict ourselves to the de Sitter world (5),(7) and the $ SO(4,1) $ group. The case of the anti-de Sitter world will be considered in a separate paper, because of the difficulties in interpreting the space-time coordinates.

The commutation relations for the generators of the group $ SO(4,1) $ have the form:
\begin{eqnarray*}
\left[ M_{\mu\nu},M_{\varrho\sigma} \right] =i\left( g_{\mu\varrho}M_{\nu\sigma}-g_{\mu\sigma}M_{\nu\varrho}-g_{\nu\varrho}M_{\mu\sigma}+g_{\nu\sigma}M_{\mu\varrho} \right);\\
\left[ M_{\mu\nu},P_{\varrho}\right] = i\left( g_{\mu\varrho}P_{\nu}-g_{\nu\varrho}P_{\mu}\right);\qquad \quad \left[ P_{\mu},P_{\nu}\right] =\dfrac{i}{R^{2}}M_{\mu\nu}; 
\end{eqnarray*}
where $ P_\mu=(1/R)M_{4\mu} $.

The Casimir operators of the Lie algebra of the group $ SO(4,1) $ have the following form, \cite{Gursey}:
\begin{equation}
C_{1}=-\dfrac{1}{2R^{2}}M_{ab}M^{ab}=-P_{\lambda}P^{\lambda}-\dfrac{1}{2R^{2}}M_{\mu\nu}M^{\mu\nu}=M^{2},\label{eq:9}
\end{equation}
and
\begin{equation}
C_{2}=-W_{a}W^{a},\quad W_{a}=\dfrac{1}{8R}\varepsilon_{abcde}M^{bc}M^{de}.\label{eq:10}
\end{equation}

	Here, the lifting and lowering of the indexes are carried out using the 5-dimensional metric tensor \eqref{eq:7}.
	
	To consider the result contraction $ R\rightarrow\infty $, it is convenient to represent the operator $ C_{2} $ in the following form:
	\begin{equation}
	C_{2}=-V_{\lambda}V^{\lambda}-\dfrac{1}{R^{2}}W_{4}^{2},
	\end{equation}
	where 
	\begin{eqnarray}
	W_{\lambda}&=&-\dfrac{1}{2}\varepsilon_{\lambda\varrho\mu\nu 4}P^{\varrho}M^{\mu\nu},\\
	W_{5}&=&\dfrac{1}{8}\varepsilon_{\lambda\mu\nu\varrho}M^{\lambda\mu}M^{\nu\varrho}.
	\end{eqnarray}

	From the last expressions, it is clear that when $ R\rightarrow \infty $ the Lie algebra of the group $ SO(4,1) $ becomes over to the Lie algebra of the Poincar\'{e} group, and the Casimir operators become:
	\begin{eqnarray}
	C_{1}&\rightarrow&-P_{\lambda}P^{\lambda}= m^{2},\\
	C_{2}&\rightarrow&-V_{\lambda}V^{\lambda}=m^{2}s(s+1).	
	\end{eqnarray}
	where $ s,m $ -- spin and rest mass, respectively.
	
	From the limiting transition of the Casimir operators $ C_{1},C_{2} $ it follows that unlike the Minkowski world in the de Sitter world, elementary systems are identified not by mass and spin, but by some functions of spin and mass.

	\textbf{\textit{Important!}} It is obvious that any function of invariants is invariant. Therefore, as invariants characterizing unitary irreducible representations of the group $ SO(4,1) $, any pair of functionally independent invariants can be chosen.	
	
	In particular, non-degenerate representations of the group $ SO(4,1) $ can be realized in the space of $ (2s+1) $-component vector-functions and the degree of homogeneity of $ \sigma $ on the upper field of the cone, \cite{Rajab, Rajab-2,Rajab-3}. Then the parameters $ s $ and  $ \sigma $ will play the role of invariants characterizing the irreducible representations of the group $SO(4,1)$. 
	
	In addition, as will be shown in the next section, after the operation of contraction, the parameter $ s $ becomes into spin and the $ \sigma $ to the function of mass $ m $.
	
	\section{The contraction of the representations of the group $ SO(4,1) $}
	Representations of the group $ SO(4,1) $ will be constructed in the space of measurable vector-valued functions $\Phi:SO(4,1)\longrightarrow C^{2s+1}$ that satisfies the following conditions:
	\begin{equation}
	\Phi_{\lambda}(gp)=\sum_{\lambda'=-s}^{s}\Delta_{\lambda\lambda'}^{(\sigma,s)}(p^{-1})\Phi_{\lambda'}(g),\qquad Re\,\sigma=-3/2;\label{eq:185}
	\end{equation}
	\[
	\sum_{\lambda=-s}^{s}\int\left|\Phi_{\lambda}(p)\right|^{2}\varrho(g)dg<\infty.
	\]
	Here, $ \rho (\cdot) $ is a continuous non-negative function on the group $ SO (4,1) $ such that
	\begin{equation}
	\int\varrho(gp)d_{l}(p)=1\label{eq:186}
	\end{equation}
	and supp$\,\rho$ has a compact intersection with each class of the contiguity $ gP $. Such a function exists for any locally compact group \cite{Kirillov}. We introduce in this space a scalar product:
	\begin{equation}
	\left(\Phi^{(1)},\Phi^{(2)}\right)=\sum_{\lambda=-s}^{s}\overline{\Phi_{\lambda}^{(1)}(g)}\Phi_{\lambda}^{(2)}(g)\varrho(g)dg.\label{eq:187}
	\end{equation}
	
	The Hilbert space obtained in this way will be denoted by $ \mathfrak{L}^2\left( SO(4,1); \Delta ^{(\sigma,S)}\right) $ and consider the representation of the group $ SO (4,1) $ acting in this space along formula:
	\begin{equation}
	T^{(\sigma,s)}(g)\Phi_{\lambda}(g_{1})=\Phi_{\lambda}(g^{-1}g_{1}).\label{eq:188}
	\end{equation}
	
	This is the induced representation in the Mackey sense of the group  $ SO (4,1) $, \cite{Mackey}.
	
	We now construct another realization of the space used for studying problems of contraction of representations. For this, we consider on the cone:
	\[ \left[ k,k\right] = k_{0}^{2}-k_{1}^{2}+k_{2}^{2}+ k_{3}^{2}+k_{4}^{2}=0,\quad k_{0}>0,\]
	the new coordinate system:
	\begin{eqnarray}
	k & = & \left|k_{4}\right|(\zeta,\varepsilon),\nonumber \\
	\zeta_{0} & = & \cosh\beta,\nonumber \\
	\zeta_{1} & = & \sinh\beta\sin\vartheta\sin\varphi,\label{eq:191}\\
	\zeta_{2} & = & \sinh\beta\sin\vartheta\cos\varphi,\nonumber \\
	\zeta_{3} & = & \cos\vartheta\sinh\beta;\nonumber 
	\end{eqnarray}
	\[
	\left|k_{4}\right|\neq0,\quad\varepsilon=\pm1,\quad0\leq\beta<\infty,\quad0\leq\vartheta\leq\pi,\quad0\leq\varphi<2\pi.
	\]
	\begin{equation}
	\zeta_{0}^{2}-\zeta_{1}^{2}-\zeta_{2}^{2}-\zeta_{3}^{2}=1.\label{eq:192}
	\end{equation}
	
	The set of points of the upper half of the cone that is not covered by this coordinate
	system forms a manifold of lower dimension. Since the upper half of the cone is a transitive surface, one can obtain parametrization of the elements of the group starting from this coordinate system.
	
	First, we fix one of the points of the cone, namely, the point $ \mathring{k} $:
	\begin{equation}
	\left|k_{4}\right|=1,\quad\varepsilon=1,\quad\beta=\vartheta=\varphi=0.	
	\end{equation}
Next we note that the stationary group $\mathfrak{W} $ of the point $ \mathring{k} $ is isomorphic to the group motions $M(3)$ of 3-dimensional Euclidean space $ E_ {3} $. Each element $w\in \mathfrak{W}$ can be uniquely represented as:
\begin{equation}
w=R(\rho)B(\vec{z}),\label{eq:16}
\end{equation}
where $ \vec {z} $ is a 3-dimensional vector, and $ \rho $ is an orthogonal matrix $ 3 \times 3 $. The matrices $ R (\rho) $ and $ B (z) $ have the following form:
	\begin{equation}
	R(\rho)=\begin{pmatrix}1 & \vec{0\,}^{\intercal} & 0\\
	\vec{0} & \rho & \vec{0}\\
	0 & \vec{0\,}^{\intercal} & 1
	\end{pmatrix},\quad R(\rho)\in SO(3)\subset SO(4,1),\label{eq:17}
	\end{equation}
	\begin{equation}
	B(\vec{z})=\begin{pmatrix}1+z^{2}/2 & \vec{z\,}^{\intercal} & -z^{2}/2\\
	\vec{z} & I_{3} & -\vec{z}\\
	z^{2}/2 & \vec{z\,}^{\intercal} & 1-z^{2}/2
	\end{pmatrix},\quad z^{2}=\vec{z\,}^{2}\label{eq:18}
	\end{equation}
	The matrices $ B (\vec {z}) $ form an abelian subgroup that is normal division subgroup of the stationary subgroup $ \mathfrak{W}. $
	
	The transitivity property allows an almost element $ g \in SO(4,1) $ to represent an expansion:
	\begin{equation}
	g=h_{\varepsilon}(k)w,\quad\varepsilon=\pm 1,\quad w \in \mathfrak{W},\label{eq:195}
	\end{equation}
	where $ h_{\varepsilon}(k) $ is the so-called Wigner operator ("boost") having the property:
	\[ k=h_{\varepsilon}(k)\mathring{k}. \]
	The Wigner operator is defined by the following formulas:
	\begin{equation}
	h_{1}(k)=g_{12}(\varphi)g_{23}(\vartheta)g_{03}(\beta)g_{04}(\tau),\qquad\tau=\ln\left|k_{4}\right|,\label{eq:193}
	\end{equation}
	for the subset $ \varepsilon = 1 $, and
	\begin{equation}
	h_{-1}(k)=g_{12}(\varphi)g_{23}(\vartheta)g_{03}(\beta)\eta g_{04}(\tau),\qquad\tau=\ln\left|k_{4}\right|,\label{eq:194}
	\end{equation}
	for the subset $ \varepsilon = -1 $.
	
	Here $ \eta $ is the diagonal $ 5\times 5 $ matrix: $ \eta = 
	\mathrm{diag}(1,-1,-1,-1,-1) $.
	
	We now note that the action of $ g \in SO (4,1) $ on a cone induces nonlinear transformation of this element on the sections $\left | k_ {4} \right | = 1 $, which are the uppers of a two-sheeted hyperboloid.
	
	It follows from \eqref{eq:193}-\eqref{eq:195} that the stationary subgroup of this variety is the minimal parabolic subgroup $ \mathsf {P} $, \eqref{eq:31} and for almost all $ g \in SO(4,1) $ the expansion is valid:
	\begin{equation}
	g=h_{\varepsilon}(\zeta)p,\label{eq:196}
	\end{equation}	
	where
	\[
	h_{\varepsilon}(\zeta)=h_{\varepsilon}(k)|_{\tau=0},\quad\varepsilon=\pm1.
	\]
	
	Let $ \mathfrak{H}_ {i}, \: i = 1,2 $ be the Hilbert spaces of vector functions on the hyperboloid \eqref{eq:192} with the scalar product:
	\begin{equation}
	(F^{(1)},F^{(2)})=\int\overline{F^{(1)}(\zeta)}F^{(2)}(\zeta)d\mu(\zeta),\label{eq:197}
	\end{equation}
	where
	\begin{eqnarray}
	d\mu & = & \frac{\left(d\vec{\zeta}\right)}{2\sqrt{1+\zeta^{2}}},\label{eq:198}\\
	\left(d\vec{\zeta}\right) & = & d\zeta_{1}d\zeta_{2}d\zeta_{2},\nonumber 
	\end{eqnarray}
	is the usual Lorentz-invariant measure on the field of a two-sheeted hyperboloid.
	
	We define an isometric mapping:
	\[
	\mathfrak{L}^{2}\left(SO(4,1);\Delta^{(\sigma,s)}\right)\longrightarrow\mathfrak{H}_{1}\oplus\mathfrak{H}_{2}
	\]
	in the following way:
	\begin{equation}
	F_{\lambda}(\zeta;\varepsilon)=\Phi_{\lambda}(h_{\varepsilon}(\zeta)).\label{eq:199}
	\end{equation}
	The inverse mapping is found with the help of \eqref{eq:185} and \eqref{eq:195}:
	\begin{equation}
	\Phi_{\lambda}(g)=\sum_{\lambda'=-s}^{s}\Delta_{\lambda\lambda'}^{(\sigma,s)}(p^{-1})F_{\lambda'}(\zeta;\varepsilon).\label{eq:200}
	\end{equation}
	
	These mapping properties are easily verified. It follows from \eqref{eq:185}, that under the mapping \eqref{eq:199}-\eqref{eq:200} the representation of the group $ SO(4,1) $ in the space $ \mathfrak{L}^{2} \left (SO(4,1); \Delta^{(\sigma, s)} \right) $ becomes a unitary representation of this group in $ \mathfrak{H}_{1} \oplus \mathfrak{H}_{2} $:
	\begin{equation}
	T^{(\sigma,s)}(g)F_{\lambda}(\zeta;\varepsilon)=\sum_{\lambda'=-s}^{s}\Delta_{\lambda\lambda'}^{(\sigma,s)}(p_{g}^{-1})F_{\lambda'}(\zeta_{g};\varepsilon_{g}),\label{eq:201}
	\end{equation}
	where $ \zeta_{g}, \varepsilon_{g}, p_{g} $ - are determined by the equation:
	\begin{equation}
	g^{-1}h_{\varepsilon}(\zeta)=h_{\varepsilon_{g}}(\zeta_{g})p_{g}.\label{eq:202}
	\end{equation}
	Thus, we have found the realization of the $ UIR $'s of the continuous main series of the group $ SO(4,1) $ in the space $ \mathfrak{H}_{1} \oplus \mathfrak{H}_{2} $.
	
	Consider now the Wigner-Inonu limit at which the $ UIR $'s of the group $ SO(4,1) $ goes into the $ UIR $'s of the group $ ISO(3,1) $. For this, it is necessary to know the limit of the parameters of the group $ SO(4,1) $, under which the de Sitter group passes to the Poincar\'{e} group $ ISO(3,1) $.
	
	Since the  $ SO(4,1) $ is a group of motions leaving an invariant quadratic form \eqref{eq:07} It is clear that to find the required limit it is sufficient to consider
	only transformations in the planes $ (\xi_{0}, \xi_{j}), \; j = 1,2,3,4 $. For example, consider the transformation:
	\[
	\xi_{0}^{'}=\xi_{0}\cosh\alpha+\xi_{4}\sinh\alpha,\qquad \xi_{4}^{'}=\xi_{0}\sinh\alpha+\xi_{4}\cosh\alpha.
	\]
	Since the de Sitter space \eqref{eq:07} when $ R \rightarrow \infty $ transforms to Minkowski space, we redefine parameter $ \alpha $, putting $ \alpha = a_ {0} /R $. Since
	\begin{equation}
	\xi_{4}=\sqrt{\xi_{0}^{2}-\xi_{1}^{2}-\xi_{2}^{2}-\xi_{3}^{2}+R^{2}},\label{eq:3119}
	\end{equation} 
	we obtain:
	\begin{eqnarray*}
		\xi_{0}' & = & \xi_{0}\cosh\left(\frac{a_{0}}{R}\right)+\sqrt{\xi_{0}^{2}-\xi_{1}^{2}-\xi_{2}^{2}-\xi_{3}^{2}+R^{2}}\sinh\left(\frac{a_{0}}{R}\right)\xrightarrow{R\rightarrow\infty}\xi_{0}+a_{0},\\
		\xi_{4}' & \rightarrow & \xi_{4}.
	\end{eqnarray*}
	Thus, we see that the transformations $ g_{04} \left (\alpha \right), \, g_{i4} \left (a_{i} \right), \, i = 1,2,3 $ in the planes $ \left (\xi_{n}, \, \xi_{4} \right), \, n = 0,1,2,3 $ go to the translation group if we assume:
	\[
	\alpha=a_{0}/R,\;\varphi_{i}=a_{i}/R,\;i=1,2,3\qquad\text{if}\quad R\rightarrow\infty.
	\]
	
	Thus we have the statement: the transformations of the subgroup $ SO(3,1) \subset SO(4,1) $, leaving the vertex of the one-sheeted hyperboloid $ [k, k] = -R^{2} $ (i.e. the de Sitter world) transform into homogeneous Lorentz transformations of the Minkowski world, and the rotation $ g_ {n4} (\alpha_ {n}) $ on the planes $ (k_{4}, k_{n}), \; n = 0,1,2,3 $ translate into translations along the corresponding axes of a Cartesian coordinate system on the quantities: $ a_{n} = \alpha_{n} R $.
	
	The result of contraction of the representation \eqref{eq:201}-\eqref{eq:202} of the group $SO(4,1)$ is formulated as a theorem:
	\begin{theorem}	
			 The result of the contraction of the UIR's, $ T ^ {(\sigma, s)} (g), \: g \in SO (4,1), \sigma = -3/2 + imR, $ for $ R \rightarrow \infty $ is the direct sum of UIR, $ U ^ {(m, s; \varepsilon)} (g), \: g \in ISO (3,1) $, with mass $ m $, spin $ s $ and differing in energy sign $ \varepsilon $.			
	\end{theorem}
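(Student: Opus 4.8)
The plan is to start from the explicit realization \eqref{eq:201}--\eqref{eq:202} of $T^{(\sigma,s)}$ on $\mathfrak{H}_{1}\oplus\mathfrak{H}_{2}$ and to compute its strong limit as $R\to\infty$ with $\sigma=-3/2+imR$, working separately on the two types of one-parameter subgroups that generate $SO(4,1)$: the Lorentz rotations $g_{ij}(\cdot),\,g_{0j}(\cdot)$ lying in $SO(3,1)$, and the ``would-be translations'' $g_{n4}(\alpha_{n})$ with the rescaling $\alpha_{n}=a_{n}/R$ dictated by the contraction established just above the theorem. First I would record the factorization of the cocycle $\Delta^{(\sigma,s)}(p^{-1})$ over the minimal parabolic $P=MAN$: the compact factor $M\simeq SO(3)$ contributes the finite-dimensional matrix $D^{(s)}$, the split part $A$ contributes a scalar homogeneity factor $e^{\sigma\tau}$ with $\tau$ the dilation parameter, and $N$ acts trivially on the leading component. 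This identifies $\zeta$ on the two-sheeted hyperboloid \eqref{eq:192} with the direction of a four-momentum, with $\varepsilon$ selecting the sheet, i.e. the sign of energy.

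Next I would treat the Lorentz part. For $g\in SO(3,1)$ the decomposition \eqref{eq:202} gives $\zeta_{g}=\Lambda^{-1}\zeta$, $\varepsilon_{g}=\varepsilon$ (orthochronous transformations preserve the sheet), and $p_{g}$ a rotation--dilation whose $M$-factor is exactly the Wigner rotation $W(\Lambda,\zeta)$; the attendant dilation factor $e^{\sigma\tau_{g}}$ combines with the Jacobian of $\zeta\mapsto\Lambda^{-1}\zeta$ to preserve the norm with respect to the Lorentz-invariant measure \eqref{eq:198}. Passing to the limit, $\Delta^{(\sigma,s)}(p_{g}^{-1})\to D^{(s)}(W(\Lambda,\zeta))$ and the Lorentz part of the action converges to the canonical Wigner action $F_{\lambda}(\zeta;\varepsilon)\mapsto\sum_{\lambda'}D^{(s)}_{\lambda\lambda'}(W(\Lambda,\zeta))\,F_{\lambda'}(\Lambda^{-1}\zeta;\varepsilon)$.

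The decisive computation concerns the translations. Here I would insert $g=g_{n4}(a_{n}/R)$ into \eqref{eq:202}, expand $h_{\varepsilon_{g}}(\zeta_{g})\,p_{g}$ to first order in $1/R$ using the boosts \eqref{eq:193}--\eqref{eq:194}, and read off that $\zeta_{g}\to\zeta$ and $\varepsilon_{g}\to\varepsilon$, so that the entire effect survives only through the dilation parameter $\tau_{g}$ of $p_{g}$. A direct expansion should give $\tau_{g}=(\,p\cdot a\,)/(mR)+O(R^{-2})$ with $p=m\varepsilon\zeta$; then $e^{\sigma\tau_{g}}=\exp\!\bigl((-3/2+imR)\,\tau_{g}\bigr)\to e^{ip\cdot a}$, the coefficient $-3/2$ being killed by the $1/R$ while the imaginary part $imR$ produces exactly the plane wave. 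Combining the two cases yields, in the limit, $U^{(m,s;\varepsilon)}(a,\Lambda)F_{\lambda}(\zeta;\varepsilon)=e^{ip\cdot a}\sum_{\lambda'}D^{(s)}_{\lambda\lambda'}(W(\Lambda,\zeta))\,F_{\lambda'}(\Lambda^{-1}\zeta;\varepsilon)$ on each sheet, which is precisely the momentum-space form of the mass-$m$, spin-$s$, energy-sign-$\varepsilon$ UIR of $ISO(3,1)$.

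Finally, since $\varepsilon_{g}=\varepsilon$ in both cases, the two sheets never mix in the limit, so $\mathfrak{H}_{1}\oplus\mathfrak{H}_{2}$ decomposes into the $\varepsilon=+1$ and $\varepsilon=-1$ invariant subspaces carrying $U^{(m,s;+1)}$ and $U^{(m,s;-1)}$; each is irreducible by the standard Mackey analysis, which gives the asserted direct sum. I expect the main obstacle to be the translation step: controlling the $1/R$ expansion of the parabolic factor $p_{g}$ in \eqref{eq:202} \emph{uniformly} in $\zeta$, so that the pointwise convergence $e^{\sigma\tau_{g}}\to e^{ip\cdot a}$ is genuinely a strong (unitary) limit of operators rather than a merely formal one, and verifying that the value $\mathrm{Re}\,\sigma=-3/2$ is consistently absorbed by the change of measure so that unitarity is maintained along the entire contraction.
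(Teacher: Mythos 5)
You should know at the outset that the paper contains no proof to compare against: the ``proof'' printed after the theorem is a single sentence deferring to the full version of the article \cite{Caucasus-3}. What can be said is that your proposal is exactly the argument that Section 5 of the paper is engineered to support: the realization \eqref{eq:201}--\eqref{eq:202} lives on a Hilbert space $\mathfrak{H}_{1}\oplus\mathfrak{H}_{2}$ that does not depend on $R$, so the contraction can be posed as a strong operator limit there, with the rescaling $\alpha_{n}=a_{n}/R$ and the choice $\sigma=-3/2+imR$ fixed immediately before the theorem; this is the global (non-infinitesimal) route of Mickelsson--Niederle \cite{Mic-Nied}, and your conclusion agrees with the infinitesimal result of Str\"{o}m \cite{Strom1965} that the paper invokes. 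Your central computation checks out: acting with $g_{04}(a_{0}/R)$ on a section point $(\zeta,\varepsilon)$, $|k_{4}|=1$, gives $k_{4}'=\varepsilon+\zeta_{0}a_{0}/R+O(R^{-2})$, hence $\tau_{g}=\varepsilon\zeta_{0}a_{0}/R+O(R^{-2})$ and $e^{\sigma\tau_{g}}\rightarrow e^{im\varepsilon\zeta_{0}a_{0}}$; the spatial $g_{j4}$ work the same way, assembling the plane wave $e^{ip\cdot a}$ with $p=m\varepsilon\zeta$, $p^{2}=m^{2}$, and sheet-flipping ($\varepsilon_{g}\neq\varepsilon$) occurs only where $\zeta_{0}$ is of order $R$, so the two sheets decouple in the limit and the direct sum over $\varepsilon=\pm1$ emerges.

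Two adjustments would tighten the argument. First, for $g$ in the Lorentz subgroup $SO(3,1)$ the dilation factor is exactly trivial in this realization: $\Lambda$ preserves $k_{4}$ and maps the unit hyperboloid to itself, so $\tau_{g}=0$, $\varepsilon_{g}=\varepsilon$, $\zeta_{g}=\Lambda^{-1}\zeta$, and $p_{g}$ lies in $MN$ with $M$-part the Wigner rotation; there is no interplay between $e^{\sigma\tau_{g}}$ and a Jacobian to invoke, since both equal $1$, the measure \eqref{eq:198} being Lorentz invariant. Second, the analytic obstacle you flag --- uniformity in $\zeta$ of the $1/R$ expansion --- is not actually needed: the operators $T^{(\sigma,s)}(g)$ at finite $R$ are unitary, hence uniformly bounded, so strong convergence follows from $L^{2}$-convergence on the dense subspace of continuous compactly supported $F$, where pointwise convergence of $\zeta_{g}$, $\varepsilon_{g}$ and of the phase, together with dominated convergence (the relevant supports stay inside a fixed compact set for large $R$), suffices. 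With these two repairs your outline closes into a complete proof of the stated theorem; whether it coincides with the argument of \cite{Caucasus-3} cannot be judged from the text under review.
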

	\begin{proof}
		The proof of the theorem is given in the full version of the article, \cite{Caucasus-3}.	
	\end{proof}
	This conclusion is the same as the result of \cite {Strom1965}, obtained using infinitesimal transformations.
	
	\section{Conclusion}
	So let's sum up ...
	
    In the flat world of Minkowski, Wigner's elementary systems are determined by the rest mass, the spin and the sign of their energy can be identified with elementary particles.
    Considerations of the stability of physical systems force us to limit the spectrum of energy from below and to exclude negative energies.
	
	In the de Sitter world, elementary Wigner systems are identified by spin and by a parameter, which is the flat limit of a function of spin and mass, with different energy signs. But unlike the Minkowski world, we can not exclude negative energies from consideration. 
	
	That is, elementary systems on a cosmological scale can be in states with positive and negative energies. Elementary systems in a state with positive energy behave like a gravitating mass, and in a negative energy state as an anti-gravity mass.
	
	Thus, we arrive at the following conclusions:
	\begin{enumerate}
		\item Mysterious "dark matter" and "dark energy" consist of such elementary systems.
		\item "Dark matter" and "dark energy" are the first manifestations of quantum properties on the scale of the universe. Until now, quantum phenomena have been encountered in the micro-world, and also as macroscopic quantum effects in the theory of condensed matter.
		\item "Dark matter" and "dark energy" are carriers of information about the first moments of the universe after the Big Bang.
	\end{enumerate}
	
	The last conclusion follows from the fact that according to the standard cosmological model, the de Sitter world is a necessary phase of the evolution of the universe in the first instants ($ 10^{-34}-10^{-32} $ seconds) after the Big Bang.
	
	Of course, our universe is not de Sitter's world, although according to some data it is developing in the direction of this model. Considering the given phenomena of dark matter and dark energy in the general case is a difficult task because today there is no quantum theory of gravity. The solution of this problem in general for the gravitational field requires not only new physical
	concepts but also new mathematics.
	
		\section*{Appendix: Parametrization of elements of the group $ SO(4,1) $}
		The group $ SO (4,1) $ is a connected component of the unit group of motions
		5-dimensional pseudo-Euclidean space that leaves invariant quadratic form, \cite{Gursey}:
		\[
		\left[k,k\right]=k_{0}^{2}-k_{1}^{2}-k_{2}^{2}-k_{3}^{2}-k_{4}^{2}.
		\]
		It is a 10-parametric, as well as the Poincar\'{e} group, which describes symmetry of Minkowski space.
		
		Elements of $ g \in SO (4,1) $ are represented by $ 5 \times5 $ by matrices that
		linear transformation in the space of points $ k = \left (k_ {0}, k_ {1}, k_ {2}, k_ {3}, k_ {4} \right) $. It follows from the definition that they satisfy the following relations:
		\begin{equation}
		g^{\intercal}\eta g=\eta,\quad\det g=1,\quad g_{00}\geqslant1,\label{eq:6}
		\end{equation}
		where the symbol $ \intercal $ denotes transposition and $ \eta $ is the diagonal $ 5\times 5 $ matrix:
		\begin{equation}
		\eta = \mathrm{diag}(1,-1,-1,-1,-1).\label{eq:7}
		\end{equation}
		
		The matrix elements will be indexed:
		\begin{eqnarray*}
			a,b,c & = & 0,1,2,3,4;\\
			i,j,k & = & 0,1,2,3;\\
			\alpha,\beta,\gamma & = & 1,2,3.
		\end{eqnarray*}
		In the latter case, vector notation will also be used:
		\[
		\vec{a}=\left\{ a^{\alpha}:\:\alpha=1,2,3\right\} .
		\]
		
		Elements of the Lie algebra $ SO (4,1) $ of the de Sitter group $ \Gamma_ {ab} $ are
		five-row matrices with elements of the form:
		\[
		\left(\Gamma_{ab}\right)_{d}^{c}=\delta_{\:a}^{c}\eta_{bd}-\delta_{\:b}^{c}\eta_{ad}
		\]
		and satisfy the commutation relations:
		\[
		\left[\Gamma_{ab},\Gamma_{cd}\right]=\eta_{ac}\Gamma_{bd}-\eta_{bc}\Gamma_{ad}-\eta_{ad}\Gamma_{bc}+\eta_{bd}\Gamma_{ac}
		\]
		Here $ \delta _ {\: a} ^ {b} $ is the Kronecker symbol:
		\[ 
		\delta _ {\: a} ^ {b}=\begin{cases}1, \quad \text{if}\quad   a=b;\\
		0,\quad \text{if} \quad  a\neq b.
		\end{cases}
		\]
		
		Since the transformations $ g \in SO (4,1) $ preserve the form $ \left [k, k \right] = k_ {0} ^ {2} -k_ {1} ^ {2} -k_ {2} ^ { 2} -k_ {3}^{2} -k_ {4} ^ {3} $,  then the surfaces are:	 
		\[
		k_{0}^{2}-k_{1}^{2}-k_{2}^{2}-k_{3}^{3}-k_{4}^{2}=const
		\]
		transform into themselves under transformations from the de Sitter group. 
		
		There are three types of such surfaces, namely the upper (or lower) floor of the two-sheeted hyperboloid $ \left [k, k \right] = c> 0 $, the one-sheeted hyperboloid $ \left [k, k \right] = c <0 $, and finally the upper (or lower) floor of the cone $ \left [k, k \right] = 0 $ without a vertex, since the point $ k_ {0} = k_ {1} = k_ {2} = k_ {3} = k_ {4} = 0 $ itself forms a homogeneous space.
		
		The action of the group on each of these surfaces is transitive. As is known, transitivity surfaces are characterized by their stationary subgroup, i.e. closed subgroup leaving the selected point fixed. Used on the different coordinate systems given on these surfaces, one can obtain different parameterizations of the elements of the group.
		
		We will consider a spherical coordinate system on a cone $[k,k]=0,\;k_{0}>0:$
		\[
		k=\omega(1,\upsilon),
		\]
		where $ \upsilon $ is the point of a 3-dimensional unit sphere:
		\begin{equation}
		\upsilon^{2}=\upsilon_{1}^{2}+\upsilon_{2}^{2}+\upsilon_{3}^{2}+\upsilon_{4}^{2}=1,\label{eq:14}
		\end{equation}
		\[ \upsilon =(\sin\chi\sin\vartheta\sin\varphi, \sin\chi\sin\vartheta\cos\varphi, \sin\chi\cos\vartheta, \cos\chi), \]
		where $0<\omega<\infty,\quad0\leq\chi\leq\pi,\quad0\leq\vartheta\leq\pi,\quad0\leq\varphi<2\pi.$
		
		Coordinates of the selected point $\overset{0}{k}=(1,0,0,0,1):\qquad\omega=1,\;\chi=\vartheta=\varphi=0.$
		
		Wigner's operator:
		\begin{equation}
		h(k)=g_{12}(\varphi)g_{23}(\vartheta)g_{34}(\chi)g_{04}(\beta),\quad\beta=\ln\omega\label{eq:15}
		\end{equation}
		
		The stationary group $ W $ of the point $ \overset {\circ} {k} $ is isomorphic to the group motions of $ M (3) $, 3-dimensional Euclidean space $ E_ {3} $. Each element $w \in \mathfrak{W}$ can be uniquely represented as:
		\begin{equation}
		w=R(\rho)B(\vec{z}\,),\label{eq:11}
		\end{equation}
		where $ \vec {z} $ is a 3-dimensional vector, and $ \rho $ is an orthogonal matrix $ 3 \times3: $
		\[
		\rho^{\intercal}\rho=I.
		\]
		The matrices $ R(\rho) $ and $ B(\vec{z}\,) $ have the following forms:
		\begin{equation}
		R(\rho)=\begin{pmatrix}1 & \vec{0\,}^{\intercal} & 0\\
		\vec{0} & \rho & \vec{0}\\
		0 & \vec{0\,}^{\intercal} & 1
		\end{pmatrix},\quad R(\rho)\in SO(3)\subset SO(4,1),\label{eq:12}
		\end{equation}
		\begin{equation}
		B(\vec{z})=\begin{pmatrix}1+z^{2}/2 & \vec{z\,}^{\intercal} & -z^{2}/2\\
		\vec{z} & I_{3} & -\vec{z}\\
		z^{2}/2 & \vec{z}\,^{\intercal} & 1-z^{2}/2
		\end{pmatrix},\quad z^{2}=\vec{z\,}^{2}.\label{eq:13}
		\end{equation}
		The matrices $ B (\vec {z}\,) $ form an abelian subgroup that is the normal subgroup of the stationary subgroup $ \mathfrak{W}. $
		
		For $ g \in SO (4,1) $ we have:
		\begin{equation}
		g=h(k)w.\label{eq:19}
		\end{equation}
		
		Hence, using the formulas \eqref{eq:15}-\eqref{eq:19}, we obtain the Iwasawa expansion for the elements of the group $ SO(4,1): $
		\begin{equation}
		g=ug_{04}(\beta)B(\vec{z}\,),\quad u\in SO(4).\label{eq:20}
		\end{equation}
		
		In doing so, we used the commutativity of the transformations $ g_ {04} (\beta) $ and $ R (\rho) $, and the fact that the group is locally $ SO (4) \simeq SO (3) \otimes SO(3) $.
		
		Here it is necessary to make two comments:
		\begin{enumerate}
			\item The action of the group $ SO(4,1) $ on the cone induces a nonlinear mapping section of the cone $ \omega = 1 $, i.e. 3-dimensional unit sphere \eqref{eq:14} to itself. It is easy to verify that this mapping is a homeomorphism. Thus, the 3-sphere is a homogeneous space, where the group $ SO (4,1) $ acts nonlinearly. For this space Wigner operator $ h (\upsilon) $ can be defined as follows:
			\begin{equation}
			h(\upsilon)=h(k)|_{\omega=1}.\label{eq:21}
			\end{equation}
			\item The mappings \eqref{eq:15} and \eqref{eq:21} are Borel. Moreover, they are differentiable almost everywhere.
		\end{enumerate}
		
		On the cone $ [k, k] = 0, \; k_ {0}> 0 $ we can introduce one more coordinate system:
		\begin{eqnarray}
		k_{0} & = & \tau\frac{1+a^{2}}{2},\nonumber \\
		\vec{k} & = & \tau\vec{a},\label{eq:22}\\
		k_{4} & = & \tau\frac{1-a^{2}}{2},\nonumber 
		\end{eqnarray}
		where $a=\vec{a\,}^{2},\quad\tau>0.$
		
		Since $ k_ {0} + k_ {4}> 0 $, this coordinate system does not cover the whole cone, namely, outside the coordinate system there remains the intersection of the cone by a hyperplane $ k_ {0} + k_ {4} = 0 $, which forms a set of smaller dimension. It is obvious that the corresponding parametrization of the group will also not be	global. The set of points $ k_ {0} + k_ {4}> 0 $ together with the map \eqref{eq:22} draws a map on the cone, which we denote by through $ \mathbb {C} _ {1}. $
		
		The coordinates of the fixed point $ \overset {\circ} {k} = (1/2, \vec {0}, 1/2) $:
		$ \tau = 1, \; \vec {a} = \vec {0}. $
		
		Wigner operator:
		\begin{equation}
		h(k)=A(\vec{a})D(\tau),\label{eq:23}
		\end{equation}
		where
		\begin{equation}
		D(\tau)=g_{04}(\alpha)=\begin{pmatrix}\cosh\alpha & \vec{0\,}^{\intercal} & \sinh\alpha\\
		\vec{0} & I_{3} & \vec{0}\\
		\sinh\alpha & \vec{0\,}^{\intercal} & \cosh\alpha
		\end{pmatrix},\qquad\tau=e^{\alpha}\label{eq:24}
		\end{equation}
		\begin{equation}
		A(\vec{a})=\begin{pmatrix}1+a^{2}/2 & \vec{a\,}^{\intercal} & a^{2}/2\\
		\vec{a} & I_{3} & \vec{a}\\
		-a^{2}/2 & -\vec{a\,}^{\intercal} & 1-a^{2}/2
		\end{pmatrix}\label{eq:25}
		\end{equation}
		
		The matrices $ A (\vec {a}\,) $, as well as the matrices $ B (\vec {z}\,) $ from \eqref{eq:13} form a 3-parameter abelian subgroup. Since, stationary subgroup is the group $\mathfrak{W}$ described in the previous subsection from \eqref{eq:11} and \eqref{eq:13} we get the expansion, which is valid \emph {almost for all} $ g \in SO(4,1): $
		\begin{equation}
		g=A(\vec{a}\,)D(\tau)R(r)B(\vec{b}\,)\label{eq:26}
		\end{equation}
		
		In order to obtain a parametrization for any element of the group, it is necessary to have an atlas on the cone without vertex.
		
		The next map of $ \mathbb {C} _ {2} $, consisting of the region $k_ {0} -k_ {4}> 0$  and mapping:
		\begin{eqnarray}
		k_{0} & = & \lambda\frac{1+x^{2}}{2},\label{eq:27}\\
		\vec{k} & = & \lambda\vec{x},\nonumber \\
		k_{4} & = & -\lambda\frac{1-x^{2}}{2};\nonumber \\
		x^{2} & = & \vec{x\,}^{2},\quad\lambda>0.\nonumber 
		\end{eqnarray}
		complements the map $ \mathbb {C} _ {1} $ to the atlas.
		
		Now we see that outside the coordinate system $\mathbb {C} {} _ {2}$
		remains the intersection of the cone with the hyperplane $k_{0}-k_{4} = 0$ , and that the maps $\mathbb {C} {}_{1}$   and $\mathbb {C} {} _ {2}$   make an atlas on the cone. The transition functions from one card to another are as follows:
		\begin{eqnarray}
		\tau & = & \lambda x^{2};\qquad\:\lambda=\tau a^{2},\label{eq:28}\\
		\vec{a} & = & \vec{x}/x^{2};\qquad\vec{x}=\vec{a}/a^{2}.\nonumber 
		\end{eqnarray}
		
		The formulas \eqref{eq:28} show that we have established a smooth structure on the cone.
		
		Now we need to construct the Wigner operator corresponding to the map \eqref{eq:27}. From \eqref{eq:6}-\eqref{eq:7} it follows that the matrix $ \eta \in SO(4,1) $, so you can enter the following boost:
		\begin{equation}
		h_{2}(\vec{k\,})=B(\vec{x\,})\eta D(\lambda).\label{eq:29}
		\end{equation}
		
		Thus, we obtain the following parametrization of $SO(4.1)$:
		\begin{equation}
		g=B(\vec{x\,})\eta D(\lambda)R(r)B(\vec{y\,}).\label{eq:30}
		\end{equation}
		
		It follows from the corresponding statements for the cone that the parameterizations \eqref{eq:26} and \eqref{eq:30} cover the entire group space. We also get that the expansion \eqref{eq:26} is valid for all $ g \in SO(4,1) $, except for the elements $ \left\{g \in SO(4,1): \: \vec {x} = 0 \right\} $, and \eqref{eq:30} is valid for all $ g \in SO (4,1) $, except for elements $ \left\{g \in SO(4,1): \: \vec {a} = 0 \right\} $.
		
		Here it is necessary to make a few remarks:
		\begin{enumerate}
			\item The elements $ \vec {a} = 0 $, i.e. elements of the form
			\begin{equation}
			p=D(\tau)R(r)B(\vec{b})\label{eq:31}
			\end{equation}
			constitute a subgroup, namely, a minimal parabolic subgroup $ \mathsf{P} $. This subgroup plays an important role in the construction of induced representations, \cite{Mensky}.
			
			Accordingly, the expansions \eqref{eq:26} and \eqref{eq:30} can now be rewritten as:
			\begin{equation}
			g=A(\vec{a})p_{1},\qquad\qquad p_{1}=D(\tau)R(r)B(\vec{b});\label{eq:33}
			\end{equation}
			\begin{equation}
			g=B(\vec{x})\eta p_{2};\qquad\qquad p_{2}=D(\lambda)R(r)B(\vec{y}).\label{eq:34}
			\end{equation}
			\item Below are the formulas for the transition between these parameterizations:
			\begin{eqnarray}
			p_{1} & = & D(x^{2})R\left(\pi;\overset{\circ}{\vec{x}}\right)B\left(-\frac{\vec{x}}{x^{2}}\right)p_{2};\qquad\qquad\vec{a}=\vec{x}/x^{2},\nonumber \\
			p_{2} & = & D(a^{2})R\left(\pi;\overset{\circ}{\vec{a}}\right)B\left(\frac{\vec{a}}{a^{2}}\right)p_{1};\qquad\qquad\;\vec{x}=\vec{a}/a^{2}.\label{eq:35}
			\end{eqnarray}
			
			Here, we denote by $ R \left (\pi; \cdot \right) $ the rotation by an angle
			$ \pi $ around the vectors $ \overset {\circ} {\vec {x}} = \vec {x} / x $ and $ \overset {\circ} {\vec {a}} = \vec {a} / a $, respectively.
			
			It is easy to see that $ \mathsf{P} $ is a stationary subgroup the north pole of the 3-sphere. Therefore:
			\[
			S^{3}\simeq SO(4,1)\diagup\mathsf{P}.
			\]
			\item Maps $ \mathbb {C}_{i}, \: i = 1,2 $ induce on a submanifold $\omega = 1$ cone, i.e. on the 3-dimensional unit sphere, the local maps $ \mathbb{V} _ {i}, \, i = 1,2 $, respectively. The map $ \mathbb{V}_{1} $ (respectively, $ \mathbb{V}_{2} $) consists of points of a sphere with a punctured southern (respectively, northern) pole and coordinate systems:
			\begin{equation}
			\upsilon_{\alpha}=\frac{2a_{\alpha}}{1+a^{2}},\qquad\qquad\upsilon_{4}=\frac{1-a^{2}}{1+a^{2}};\label{eq:36}
			\end{equation}
			respectively,
			\begin{equation}
			\upsilon_{\alpha}=\frac{2x_{\alpha}}{1+x^{2}},\qquad\qquad\upsilon_{4}=-\frac{1-x^{2}}{1+x^{2}}.\label{eq:37}
			\end{equation}
			
			As can be seen, these coordinates coincide with the stereographic projection 3-dimensional sphere from its poles to $ E_ {3}. $		
			\item The Wigner operators $ h_ {i} (\upsilon), \: i = 1,2 $ corresponding to the maps $ \mathbb {V} _ {i} $, will be defined as follows:
			\begin{equation}
			h_{1}(\upsilon)=A(\vec{a}),\qquad\qquad\qquad h_{2}(\upsilon)=B(\vec{x})\eta.\label{eq:38}
			\end{equation}
			\item The mappings \eqref{eq:23}, \eqref{eq:29} and \eqref{eq:38} are infinitely differentiable.
		\end{enumerate}
		
		In conclusion this section, we give some information about invariant measures on the group $ SO (4,1) $ and its minimally parabolic subgroup. The de Sitter group, like any connected semisimple group, unimodular, i.e. on it there exists a two-sided invariant measure Haar $ dg $. Below are the expressions for $ dg $ in different parameterizations (the numbers in parentheses on the left indicate the numbers of the corresponding parameterizations):
		\begin{eqnarray}
		\eqref{eq:20}: \ \quad\qquad\qquad\qquad e^{3\beta}dud\beta(d\vec{z\,}),\nonumber\\
		\eqref{eq:33}: \qquad\qquad\qquad\tau^{2}d\tau(d\vec{a})dr(d\vec{b\,}),\label{eq:39}\\
		\eqref{eq:34}: \qquad\qquad\qquad\lambda^{2}d\lambda(d\vec{x})dr(d\vec{y\,}),\nonumber	
		\end{eqnarray}
		
		where 
		
		$\qquad du $ is an invariant normalized measure on $ SO (4), $
		
		$\qquad dr $ is an invariant normed measure on $ SO (3); $
		
		$\qquad (d \vec {a\,}), \:( d \vec {b\,}), \:( d \vec {x\,}), \:( d \vec {y\,}) $ - volume elements in 3-dimensional Euclidean space.
		
		The subgroup $ \mathsf {P} $, in contrast to the group $ SO(4,1) $, is not unimodular. Below are the expressions for the right-invariant measures $ d_ {r}(p) $, the left-invariant measure $ d_ {l}(p) $, and also the module 	$ \delta (p) $:
		\begin{eqnarray}
		\delta(p) & = & \tau^{3},\nonumber \\
		d_{l}(p) & = & \frac{1}{\tau}d\tau dr\left(d\vec{b}\right),\label{eq:40}\\
		d_{r}(p) & = & \tau^{2}d\tau dr\left(d\vec{b}\right)=\delta(p)d_{l}(p).\nonumber 
		\end{eqnarray}
		
		Finally, from the comparison \eqref{eq:39} with \eqref{eq:40} we get:
		\begin{eqnarray}
		dg & = & dh_{i}d_{r}(p),\qquad i=1,2;\nonumber \\
		dh_{1} & = & \left(d\vec{a}\right),\label{eq:42}\\
		dh_{2} & = & \left(d\vec{x}\right).\nonumber 
		\end{eqnarray}
		
	\bigskip

\end{document}